\newcommand{\IC}{\textsf{IC}}
\newcommand{\NIC}{\textsf{NIC}}
\newcommand{\One}{\textsf{1}}
\newcommand{\ie}{i.\,e.}
\newcommand{\eg}{e.\,g.}
\newcommand{\Edge}[2]{\{#1,#2\}}
\newcommand{\Emb}{\mathcal{E}}
\newcommand{\Embedding}[1]{\Emb{}(#1)}
\newcommand{\TabLabel}[1]{\label{tab:#1}}
\newcommand{\Fig}[1]{Fig.\,\ref{fig:#1}}
\newcommand{\Tab}[1]{Table~\ref{tab:#1}}
\newcommand{\Lem}[1]{Lemma~\ref{lem:#1}}
\newcommand{\Sect}[1]{Sect.\,\ref{sect:#1}}
\begin{document}

\title{%
A Note on \IC{}-Planar Graphs %
}

\author{%
Christian Bachmaier \and
Franz J.\,Brandenburg  \and
Kathrin Hanauer}

\institute{%
University of Passau,
94030 Passau, Germany \\
\email{\{bachmaier|brandenb|hanauer\}@fim.uni-passau.de}
}

\authorrunning{C.\ Bachmaier, F.\ J.\ Brandenburg, K.\ Hanauer}

\maketitle

\begin{abstract}
  A graph is \IC{}-planar if it admits a drawing in the plane with at most
  one crossing per edge and such that two pairs of crossing edges share
  no common end vertex. \IC{}-planarity specializes both \NIC{}-planarity,
  which allows a pair of crossing edges to share at most one vertex,
	and \One{}-planarity, where each edge may be crossed at most once.

  We show that there are infinitely maximal \IC{}-planar graphs with
  $n$ vertices and $3n-5$ edges and thereby prove a tight lower bound on
  the density of this class of graphs.
\end{abstract}

\section{Introduction}\label{sect:introduction}

A graph $G$ is \emph{maximal} in a graph class $\mathcal{G}$ if no edge can be
added to $G$ without violating the defining class. The
\emph{density} (\emph{sparsity}) of $\mathcal{G}$ is an upper (lower) bound
on the number of edges of maximal graphs $G \in \mathcal{G}$ with $n$
vertices.
A maximal graph $G$ is \emph{densest} (\emph{sparsest}) in $ \mathcal{G}$
if its number of edges meets the upper (lower) bound.

It is well-known that every maximal planar graph is triangulated and has $3n-6$
edges.
The densest and sparsest planar graphs coincide. This does not longer hold
for \One{}-planar graphs, which are graphs that can be drawn with
at most one crossing per edge. These graphs have recently received much
interest~\cite{klm-ab1p-17}. \One{}-planar graphs with $n$
vertices have at most $4n - 8$ edges, and this bound is tight for $n = 8$
and all $n \geq 10$~\cite{bsw-bs-83, b-4mg1p-15}. However, there are
sparse maximal \One{}-planar graphs with less than $2.65 n$
edges~\cite{cm-aoepl-13}. The best known lower bound on the sparsity of
\One{}-planar graphs is $2.22 n$~\cite{bsw-bs-83} and neither the upper nor
the lower bound are known to be tight.

There are some subclasses of \One{}-planar graphs with different bounds for
the density and sparsity. A graph is \emph{\IC{}-planar} (independent
crossing planar)~\cite{a-cbirc-08, ks-cpgIC-10, zl-spgic-13, bdeklm-IC-16}
if it admits a drawing with at most one crossing per edge so that each
vertex is incident to at most one crossing edge, and \emph{\NIC{}-planar}
(near independent crossing planar) if two pairs of crossing edges share at
most one vertex~\cite{zl-spgic-13}. \IC{}-planar graphs have an upper bound
of $3.25 n - 6$ on the number of edges, which is known as a tight bound,
since there are such graphs for all $n = 4k$ and $k \geq 2$~\cite{zl-spgic-13}.
The lower bound has not been addressed yet.
The upper and lower bounds on the density of \NIC{}-planar graphs are
$\frac{18}{5}(n - 2)$~\cite{zl-spgic-13} and $\frac{16}{5}(n -
2)$~\cite{bbhnr-npg-17, cs-tc1pg-14} and both bounds are known to be tight for
infinitely many values of $n$. Outer \One{}-planar graphs are another
subclass of \One{}-planar graphs. They must admit a \One{}-planar embedding
such that all vertices are in the outer face~\cite{abbghnr-o1p-15,heklss-ltao1p-15}.
Results on the density of maximal graphs are summarized
in \Tab{density}.

\begin{table}[tb]
\caption{The density of maximal graphs on $n$ vertices.}
\TabLabel{density}
\setlength{\tabcolsep}{3pt}
\renewcommand{\arraystretch}{1.2}
\centering
\begin{tabular}{r|c|c|c|c}
\noalign{\smallskip}
& 1-planar
& \NIC{}-planar
& \IC{}-planar
& outer 1-planar
\\ \hline
upper bound
& $4n - 8$~\cite{bsw-bs-83,bsw-1og-84}
& $\frac{18}{5}(n - 2)$~\cite{z-dcmgprc-14, bbhnr-npg-17}
& $\frac{13}{4}n - 6$~\cite{ks-cpgIC-10}
& $\frac{5}{2}n - 2$~\cite{abbghnr-o1p-15}
\\
$\lfloor$ example & $4n-8$~\cite{bsw-bs-83, bsw-1og-84} &
$\frac{18}{5}(n - 2)$~\cite{cs-tc1pg-14, bbhnr-npg-17} &
$\frac{13}{4}n - 6$~\cite{zl-spgic-13} & $\frac{5}{2}n -
2$~\cite{abbghnr-o1p-15}
\\
lower bound
& $\frac{20}{9} n - \frac{10}{3}$~\cite{bt-idm1p-15}
& $\frac{16}{5}(n-2)$~\cite{bbhnr-npg-17}
& $3n-5$~[\Sect{density}]
& $\frac{11}{5}n - \frac{18}{5}$~\cite{abbghnr-o1p-15}
\\
$\lfloor$ example
& $\frac{45}{17}n - \frac{84}{17}$~\cite{begghr-odm1p-13}
& $\frac{16}{5}(n - 2)$~\cite{bbhnr-npg-17}
& $3n-5$~[\Sect{density}]
& $\frac{11}{5}n - \frac{18}{5}$~\cite{abbghnr-o1p-15}
\\
\end{tabular}
\end{table}
Here, we establish a lower bound of $3n - 5$ on the density of \IC{}-planar
graphs and show that it is tight for all $n \geq 5$.

\section{Density}\label{sect:density}
We first prove the existence of maximal \IC{}-planar graphs that have
$n$ vertices and only $3n -5$ edges:
\begin{lemma}\label{lem:ic-lowbound}
For every $n \geq 5$ there is a maximal \IC{}-planar graph with $n$ vertices
and $3n-5$ edges.
\end{lemma}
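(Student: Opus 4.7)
The plan is to exhibit an explicit family of maximal \IC{}-planar graphs $G_n$ on $n$ vertices, each with exactly $3n-5$ edges. A natural candidate is the \emph{double wheel with equator}: take a cycle $v_1 v_2 \cdots v_{n-2}$, two further vertices $u,w$ each adjacent to every $v_i$, and the edge $uw$. Counting gives $(n-2) + 2(n-2) + 1 = 3n-5$ edges, matching the target, and for the base case $n=5$ the graph is $K_5$, which is trivially maximal \IC{}-planar. I would certify \IC{}-planarity for general $n$ by drawing the cycle as a convex polygon, placing $u$ in its interior and $w$ in its exterior, routing the spokes as straight segments, and letting $uw$ cross the single cycle edge between the two regions. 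This yields exactly one crossing with four pairwise distinct endpoints, as required.

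For maximality, every non-edge is a chord $v_iv_j$ with $|i-j|\ge 2$, so it suffices to show that $G_n + v_iv_j$ is never \IC{}-planar. For $n \leq 7$ the count $3n-4$ already exceeds the density upper bound $\tfrac{13}{4}n-6$ recalled from~\cite{ks-cpgIC-10}, ruling out \IC{}-planarity directly. For $n \geq 8$ this easy argument fails, and I would instead reason structurally: both apexes have degree $n-1 \ge 7$ and can each be incident to at most one crossing edge in any \IC{}-planar drawing, so their remaining edges form two planar fans that essentially pin down the cyclic order of the $v_i$ in the plane. An added chord then has to cross at least one edge of one of these fans, forcing $u$ or $w$ to be incident to two crossings and violating \IC{}-planarity.

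The main obstacle is precisely this structural step for $n \geq 8$, because maximality must hold under \emph{every} possible embedding of the augmented graph, not just the natural one. Ruling out all \IC{}-planar re-embeddings requires a careful analysis of the planarizations of $G_n + v_iv_j$, leveraging the fact that with $3n-4$ edges almost every vertex must participate in a crossing. If this analysis turns out to be too delicate for a clean write-up, a viable fallback is to use a chain of edge-glued copies of $K_5$ for $n \equiv 2 \pmod{3}$, augmented by triangulating attachments for the remaining residues; there, the local rigidity around each $K_5$ block makes the maximality argument combinatorially much simpler.
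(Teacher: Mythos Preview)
Your construction (the double wheel with equator) is exactly the graph $G_n$ the paper uses, and your overall strategy coincides with the paper's. The ``main obstacle'' you anticipate is handled in the paper not by analysing $G_n+v_iv_j$ directly but by proving that every \IC{}-planar embedding of $G_n$ itself is unique up to symmetry: a short three-case Jordan-curve argument rules out cycle--cycle, spoke--cycle, and $p$-spoke--$q$-spoke crossings, forcing the sole crossing to be between $\{p,q\}$ and a cycle edge; since all faces of this rigid embedding are triangles meeting a pole, no chord can be added. Your fallback construction is therefore unnecessary.
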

\begin{proof}
As $K_5$ has exactly $3n-5 = 10$ edges and is \IC{}-planar, the statement
trivially follows for $n = 5$.
Let us hence assume in the following that $n \geq 6$.
We construct a graph $G_n$ with $n$ vertices and $3n -5$ edges as follows:
$G_n$ consists of $n-2$ vertices forming a circle $C = (v_0, v_1, \dots,
v_{n-3})$ as well as two pole vertices $p$ and $q$.
For every $0 \leq i < n-2$, $G_n$ has edges $\Edge{v_i}{p}$ and
$\Edge{v_j}{q}$.
Additionally, there is an edge $\Edge{p}{q}$ connecting the poles.
As an example, \Fig{sparse-ic} depicts the graph $G_8$.
Then, every vertex $v_i$, $0 \leq i < n-2$, is incident to exactly two circle
edges as well as to both $p$ and $q$, and $p$ and $q$ are each incident to
$n-1$ edges.
Hence, $G_n$ has $\frac{1}{2}(4(n-2) + 2(n-1)) = 3n-5$ edges.
As every planar graph has at most $3n-6$ edges, every embedding must contain at
least one pair of crossing edges.
Let $\Embedding{G_n}$ be any \IC{}-planar embedding of $G_n$.
We will now show that $\Edge{p}{q}$ must be crossed in $\Embedding{G_n}$
and that $\Embedding{G_n}$ is unique up to isomorphism.
\begin{figure}[tb]
\centering
\begin{tikzpicture}
\node (graph) {\includegraphics[scale=.75]{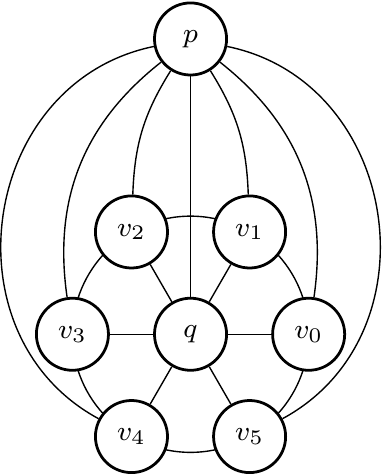}
  \phantomsubcaption\label{fig:sparse-ic}};
\node (nopq) [right=of graph] {\includegraphics[scale=.75]{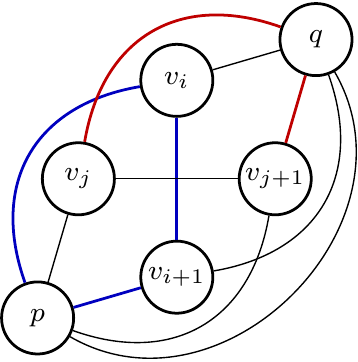}
  \phantomsubcaption\label{fig:sparse-ic-nopq}};
\node (justp) [below=of graph] {\includegraphics[scale=.75]{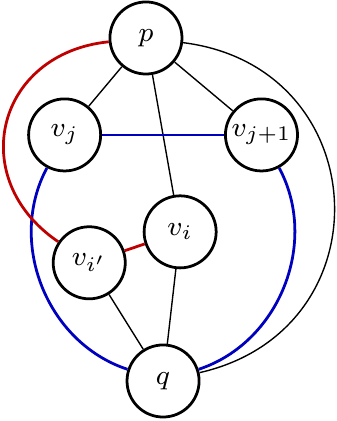}
  \phantomsubcaption\label{fig:sparse-ic-justp}};
\node (bothpq) [right=of justp] {\includegraphics[scale=.75]{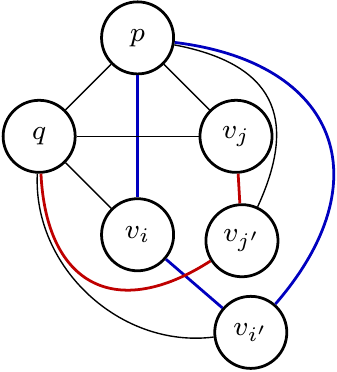}
  \phantomsubcaption\label{fig:sparse-ic-bothpq}};
\node[anchor=north west] at (graph.north west) {(\subref*{fig:sparse-ic})};
\node[anchor=north west] at (nopq.north west) {(\subref*{fig:sparse-ic-nopq})};
\node[anchor=north west] at (justp.north west) {(\subref*{fig:sparse-ic-justp})};
\node[anchor=north west] at (bothpq.north west) {(\subref*{fig:sparse-ic-bothpq})};
\end{tikzpicture}
\caption{%
Proof of \Lem{ic-lowbound}:
The graph $G_n$ for $n = 8$~(\subref{fig:sparse-ic}) along with sketches of the
cases where
an edge $\Edge{v_i}{v_{i+1}}$ crosses an edge $\Edge{v_j}{v_{j+1}}$~(\subref{fig:sparse-ic-nopq}),
$\Edge{v_i}{p}$ crosses an edge $\Edge{v_j}{v_{j+1}}$~(\subref{fig:sparse-ic-justp}),
and
$\Edge{v_i}{p}$ crosses an edge $\Edge{v_j}{q}$~(\subref{fig:sparse-ic-bothpq}),
which all yield non-\IC{}-planar embeddings of $G_n$.
}%
\label{fig:sparsest-ic}
\end{figure}

Suppose that an edge $\Edge{v_i}{v_{i+1}}$ crosses an edge
$\Edge{v_j}{v_{j+1}}$ in $\Embedding{G_n}$ (see \Fig{sparse-ic-nopq}).
Due to \IC{}-planarity,
$\Edge{v_i}{p}$, $\Edge{v_{i+1}}{p}$,
$\Edge{v_j}{q}$, and $\Edge{v_{j+1}}{q}$
must be planar.
In consequence of the crossing, $v_j$ and $v_{j+1}$ lie on different sides of
the closed path $P$ consisting of $\Edge{v_i}{p}$, $\Edge{v_i}{v_{i+1}}$, and
$\Edge{v_{i+1}}{p}$.
Hence, either $\Edge{v_j}{q}$ or $\Edge{v_{j+1}}{q}$ must cross an edge of $P$,
a contradiction.
Thus, every crossing in $\Embedding{G_n}$ must involve at least one of $p$ or $q$.

Suppose that an edge $\Edge{v_i}{p}$ crosses an edge $\Edge{v_j}{v_{j+1}}$ in
$\Embedding{G_n}$ (see \Fig{sparse-ic-justp}).
By \IC{}-planarity, $\Edge{v_j}{q}$ and $\Edge{v_{j+1}}{q}$ must be planar.
As $n \geq 6$, there must be a vertex $v_{i'}$ adjacent to $v_i$.
Furthermore, $\Edge{v_{i'}}{v_i}$ and $\Edge{v_{i'}}{p}$ must be planar
due to \IC{}-planarity.
In consequence of the crossing, $p$ and $v_i$ however lie on different sides
of the closed path $P$ consisting of $\Edge{v_j}{q}$, $\Edge{v_j}{v_{j+1}}$,
and $\Edge{v_{j+1}}{q}$, so either $\Edge{v_{i'}}{v_i}$ or $\Edge{v_{i'}}{p}$
must cross an edge of $P$, a contradiction.
Thus, every crossing in $\Embedding{G_n}$ must involve both $p$ and $q$.

Finally, suppose that an edge $\Edge{v_i}{p}$ crosses an edge $\Edge{v_j}{q}$
in $\Embedding{G_n}$ (see \Fig{sparse-ic-bothpq}).
As $n \geq 6$, there must be two further vertices $v_{i'} \neq v_{j'}$
such that $v_{i'}$ is adjacent to $v_i$ and $v_{j'}$ is adjacent to $v_j$.
Furthermore, $\Edge{v_i}{v_{i'}}$, $\Edge{v_j}{v_{j'}}$, $\Edge{v_{i'}}{p}$,
and $\Edge{v_{j'}}{q}$ must be planar due to \IC{}-planarity.
In result of the crossing, $v_j$ and $q$ lie on different sides of the
closed path $P$ consisting of $\Edge{v_i}{p}$, $\Edge{v_i}{v_{i'}}$,
and $\Edge{v_{i'}}{p}$.
Thus, one of $\Edge{v_j}{v_{j'}}$ and $\Edge{v_{j'}}{q}$ must cross $P$,
a contradiction.

Consequently, every crossing in $\Embedding{G_n}$ must contain $\Edge{p}{q}$,
which in turn can only cross an edge $\Edge{v_i}{v_{i+1}}$ for some $i$
with $0 \leq i < n-3$ or $\Edge{v_{n-3}}{v_0}$, as depicted, \eg, in \Fig{sparse-ic}.
As no edge can be added to $G_n$ and $\Embedding{G_n}$ such that
\IC{}-planarity is preserved, $G_n$ is maximal.
\qed{}
\end{proof}
Note that the embedding of the graphs $G_n$ from the proof of \Lem{ic-lowbound}
are unique up to isomorphism, because $\Edge{p}{q}$ must cross an arbitrary edge
$\Edge{v_i}{v_{i+1}}$.
Concerning the upper bound, observe that every maximal \IC{}-planar graph with
$n \geq 5$ vertices also has at least $3n - 5$ edges, as there always is at
least one pair of crossing edges, \ie, it cannot be planar.
\begin{theorem}
For all $n \geq 5$, maximal \IC{}-planar graphs with $n$ vertices have at
least $3n - 5$ edges, and this bound tight.
\end{theorem}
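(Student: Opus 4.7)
The theorem packages \Lem{ic-lowbound} with a matching lower bound, so tightness is immediate from the construction $G_n$. The remaining task is to establish that every maximal \IC{}-planar graph on $n \geq 5$ vertices has at least $3n - 5$ edges.

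My plan is first to show that such a graph $G$ cannot be planar. If it were, maximality with respect to \IC{}-planarity would force $G$ to be a triangulation, since otherwise we could triangulate a non-triangular face without introducing any crossing; hence $|E(G)| = 3n - 6$. I would then exhibit a crossing extension: pick a vertex $v$ of degree at least $4$, which must exist because the average degree of a triangulation on $n \geq 5$ vertices exceeds $3$, and take three consecutive neighbors $u_1, u_2, u_3$ on the link cycle of $v$. If $\{u_1, u_3\} \notin E(G)$, we may draw this non-edge crossing $\{v, u_2\}$; only the four vertices $v, u_1, u_2, u_3$ are incident to the crossing, so the drawing remains \IC{}-planar, contradicting maximality. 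If \emph{every} such diagonal already lay in $G$, a short case analysis on $\deg(v)$ produces a $K_5$ (for $\deg(v) = 4$) or a larger non-planar subgraph inside the closed neighborhood of $v$, contradicting planarity.

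Once $G$ is known to be non-planar, every \IC{}-planar drawing has some number $c \geq 1$ of crossings. Replacing each crossing with a dummy vertex yields a plane graph with $n + c$ vertices and $|E(G)| + 2c$ edges, so Euler's formula gives $|E(G)| \leq 3n - 6 + c$. Maximality forces equality, since in a non-triangular face of the planarization one could add a chord between two non-adjacent real vertices without creating any new crossing; hence $|E(G)| = 3n - 6 + c \geq 3n - 5$. I expect the main obstacle to be justifying this triangularity of the planarization: faces that carry dummy vertices on their boundary require care, and one must verify that any non-triangular face hosts two non-adjacent real vertices whose joining chord is a legitimate edge addition to $G$.
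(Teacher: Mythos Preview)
Your outline matches the paper's strategy exactly, but you supply far more detail than the paper does. The paper's entire argument for the theorem is the single sentence preceding it: a maximal \IC{}-planar graph on $n\ge 5$ vertices ``always has at least one pair of crossing edges, i.e., it cannot be planar''; tightness is \Lem{ic-lowbound}. No justification is given for why non-planarity forces $|E|\ge 3n-5$.

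Your Part~1 (non-planarity via the link of a high-degree vertex) is correct. The edge count $3d$ on $d+1$ vertices in the closed neighbourhood indeed exceeds $3(d+1)-6$ for every $d\ge 4$, so the ``all diagonals present'' case always contradicts planarity; the paper asserts non-planarity without any such argument.

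Your Part~2 is where both you and the paper are incomplete. You correctly observe that Euler on the planarization only gives $|E|\le 3n-6+c$, the wrong direction, so one genuinely needs that the planarization is a triangulation to obtain $|E|=3n-6+c\ge 3n-5$. The paper does not address this at all; it implicitly treats it as evident. Your stated concern is exactly the right one: a non-triangular face of the planarization might have every pair of real boundary vertices already adjacent in $G$ via edges drawn elsewhere, so no chord is immediately available. For \IC{}-planarity (unlike general \One{}-planarity, where sparse maximal graphs with fewer than $3n-6$ edges do exist) this obstruction can be removed --- the independence of crossings guarantees each real boundary vertex touches at most one dummy, so large faces carry enough real vertices, and a re-routing/canonical-drawing argument (e.g.\ choosing a drawing that maximises the number of triangular faces, or first forcing each crossing to sit inside a planar $K_4$ kite) closes the gap --- but this is work that neither you nor the paper carries out.

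In short: your plan is the same as the paper's, your execution is strictly more thorough, and the obstacle you anticipate is real and is precisely what the paper sweeps under ``observe that''.
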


\bibliographystyle{splncs03}
\bibliography{paper}
\end{document}